\documentclass[11pt]{amsart}
\usepackage{graphicx, amssymb}

\addtolength{\hoffset}{-1.95cm} \addtolength{\textwidth}{3.9cm}
\addtolength{\voffset}{-1.75cm}
\addtolength{\textheight}{2.4cm}

\vfuzz2pt 
\hfuzz2pt 
\newtheorem{thm}{Theorem}[section]

\newtheorem{prop}[thm]{Proposition}
\theoremstyle{definition}
\newtheorem{defn}[thm]{Definition}

\theoremstyle{remark}
\newtheorem{rem}[thm]{Remark}

\numberwithin{equation}{section}

\newcommand{\set}[1]{\left\{#1\right\}}
\newcommand{\Real}{\mathbb R}
\newcommand{\Natural}{\mathbb N}

\newcommand{\such}{\, | \,}
\newcommand{\limn}{\lim_{n \to \infty}}
\newcommand{\limk}{\lim_{k \to \infty}}
\newcommand{\dfn}{\, := \,}


\newcommand{\prob}{\mathbb{P}}

\newcommand{\plim}{\Lb^0 \text{-} \lim}
\newcommand{\plimn}{\plim_{n \to \infty}}

\newcommand{\qprob}{\mathsf{Q}}
\newcommand{\qprobb}{\mathbb{Q}}
\newcommand{\expec}{\mathbb{E}}

\newcommand{\Lb}{\mathbb{L}}
\newcommand{\F}{\mathcal{F}}

\newcommand{\cadlag}{c\`adl\`ag}

\newcommand{\ud}{\, \mathrm d}

\newcommand{\inner}[2]{\left \langle #1 , \, #2 \right \rangle}

\newcommand{\num}{num\'eraire}
\newcommand{\X}{\mathcal{X}}
\newcommand{\Xxx}{\X^{\times \times} (1)}
\newcommand{\V}{\mathcal{V}^{\downarrow} (1)}
\newcommand{\Y}{\mathcal{Y}}

\newcommand{\trace}{\mathsf{trace}}

\newcommand{\tX}{\widetilde{X}}

\newcommand{\pare}[1]{\left(#1\right)}
\newcommand{\bra}[1]{\left[#1\right]}
\newcommand{\dbra}[1]{[\kern-0.15em[ #1 ]\kern-0.15em]}
\newcommand{\dbraco}[1]{[\kern-0.15em[ #1 [\kern-0.15em[}

\newcommand{\indic}{\mathbb{I}}

\newcommand{\NFLVR}{\emph{NFLVR}}

\newcommand{\ELMM}{\emph{ELMM}}
\newcommand{\WELMM}{\emph{WELMM}}

\newcommand{\NAone}{\emph{NA$_1$}}

\newcommand{\absco}{{<\kern-0.53em<}}

\newcommand{\oxi}{\overline{\xi}}
\newcommand{\oY}{\overline{Y}}

\newcommand{\Dya}{\mathbb D}

\begin{document}

\title[Finitely additive probabilities and the FTAP]{Finitely additive probabilities and the Fundamental Theorem of Asset Pricing}%
\author{Constantinos Kardaras}%
\address{Constantinos Kardaras, Mathematics and Statistics Department, Boston University, 111 Cummington Street, Boston, MA 02215, USA.}%
\email{kardaras@bu.edu}%

\thanks{This work was partially supported by the National
Science Foundation, grant DMS-0908461}%
\keywords{Arbitrages of the first kind, cheap thrills, fundamental theorem of asset pricing, weakly equivalent local martingale measure, equivalent local martingale deflators,  semimartingales.}%
\subjclass[2000]{60G44, 60H99, 91B28, 91B70}
\date{\today}%
\dedicatory{Dedicated to Prof. Eckhard Platen, on the occasion of his 60$^\text{th}$ birthday.}%
\begin{abstract}
This work aims at a deeper understanding of the mathematical implications of the economically-sound condition of \emph{absence of arbitrages of the first kind} in a financial market. In the spirit of the Fundamental Theorem of Asset Pricing (FTAP), it is shown here that absence of arbitrages of the first kind in the market is equivalent to the existence of a finitely additive probability, weakly equivalent to the original and only locally countably additive, under which the discounted wealth processes become ``local martingales''. The aforementioned result is then used to obtain an independent proof of the classical FTAP, as it appears in \cite{MR1304434}. Finally, an elementary and short treatment of the previous discussion is presented for the case of continuous-path semimartingale asset-price processes.
\end{abstract}

\maketitle

\setcounter{section}{-1}

\section{Introduction}

In the Quantitative Finance literature, the most common normative assumption placed on financial market models in the literature is the existence of an Equivalent Local Martingale Measure (ELMM), i.e., a probability, equivalent to the original one, that makes discounted asset-price processes local martingales. There is, of course, a very good reason for postulating the existence of an ELMM in the market: the Fundamental Theorem of Asset Pricing (FTAP) establishes\footnote{At least in the case where asset-price processes are nonnegative semimartingales; see \cite{MR1671792} for the case of general semimartingales, where $\sigma$-martingales, a generalization of local martingales, have to be utilized.} the equivalence between a precise market viability condition, coined ``No Free Lunch with Vanishing Risk'' (NFLVR) with the existence of an ELMM (see \cite{MR1304434} and \cite{MR1671792}).

The importance of condition NFLVR notwithstanding, there has lately been considerable interest in researching models where an ELMM might fail to exist. Major examples include the benchmark approach in financial modeling of \cite{MR2267213}, as well as the emergence of stochastic portfolio theory (\cite{FerKar07}), a \emph{descriptive} theory of financial markets. Even though the previous approaches allow for the existence of some form of arbitrage, they still deal with viable models of financial markets. In fact, the markets there satisfy a weaker version of the NFLVR condition; more precisely, there is \emph{absence of arbitrages of the first kind}\footnote{The terminology ``arbitrage of the first kind'' was introduced in \cite{Ing87}, although our definition is closer in spirit to arbitrages of the first kind in the context of large financial markets, as appears in \cite{MR1348197}. One should also mention \cite{MR1774056}, where arbitrages of the first kind are called \textsl{cheap thrills}.} (see Definition \ref{dfn: arb first kind} of the present paper), which we abbreviate as condition NA$_1$. In the recent work \cite{Kar09a}, it was shown that condition NA$_1$ is equivalent to the existence of a \emph{strictly positive local martingale deflator}, i.e., a strictly positive process with the property that every asset-price, when deflated by it, becomes a local martingale. The previous mathematical counterpart of the economic NA$_1$ condition is rather elegant; however, and in order to provide a closer comparison with the FTAP of \cite{MR1304434}, it is still natural to wish to equivalently express the NA$_1$ condition in terms of the existence of some measure that makes discounted asset-prices have some kind of martingale property.

\smallskip

In an effort to connect, expand, and simplify previous research, the purpose of this paper is threefold; in particular, we aim at:
\begin{enumerate}
	\item presenting a weak version of the FTAP, stating the equivalence of the NA$_1$ condition with the existence of a ``probability'' that makes discounted nonnegative wealth processes ``local martingales'';
	\item using the previous result as an intermediate step to obtain the FTAP as it appears in \cite{MR1304434};
 	\item providing an elementary proof of the above weak version of the FTAP discussed in (1) above when the asset-prices are continuous-path semimartingales.
\end{enumerate}

In order to tackle (1), we introduce the concept of a Weakly Equivalent Local Martingale Measure (WELMM). A WELMM is a \emph{finitely additive} probability\footnote{Finitely additive measures have appeared quite often in in economic theory in a financial equilibrium setting in cases of infinite horizon (see \cite{RePEc:ier:iecrev:v:33:y:1992:i:2:p:323-39})
or even finite-time horizon with credit constraints on economic agents (see \cite{MR1774056} and \cite{MR1748373}).} that is \emph{locally countably additive} and makes discounted asset-price processes behave like local martingales. Of course, the last local martingale property has to be carefully and rigorously defined, as only finitely additive probabilities are involved --- see Definition \ref{dfn: local marts} later on in the text. In Theorem \ref{thm: main}, and in a general semimartingale market model, we obtain the equivalence between condition NA$_1$ and the existence of a WELMM.

Theorem \ref{thm: main} can be also seen as an intermediate step in proving the FTAP of \cite{MR1304434}. Under the validity of Theorem \ref{thm: main}, and using the very important optional decomposition theorem, this task becomes easier, as the proof of Theorem \ref{thm closedness in L0} of the present paper shows.

We now come to the issue raised at (3) above. In order to establish our weak version of the FTAP, we need to invoke the main result from \cite{Kar09a}, which itself depends heavily upon results of \cite{MR2335830}. The immense level of technicality in the proofs of the previous results render their presentation in graduate courses almost impossible. The same is true for the FTAP of \cite{MR1304434}. Given the importance of such type of results, this is really discouraging. We provide here a partial resolution to this issue in the special case where the asset-prices are continuous-path semimartingales. As is shown in Theorem \ref{thm: FTAP_cont}, proving of our main Theorem \ref{thm: main} becomes significantly easier; in fact, the only non-trivial result that is used in the course of the proof is the representation of a continuous-path local martingales as time-changed Brownian motion. Furthermore, in Theorem \ref{thm: FTAP_cont}, condition NA$_1$ is shown to be equivalent to the existence and square-integrability of a risk-premium process, which has nice economic interpretation and can be easily checked once the model is specified.

\bigskip

The structure of the paper is as follows. In Section \ref{sec: weak version of FTAP}, the market is introduced, arbitrages of the first kind and the concept of a WELMM are defined, and Theorem \ref{thm: main}, the weak version of the FTAP, is stated. Section \ref{sec: the FTAP of K-D-S} deals with a proof of the FTAP as it appears in \cite{MR1304434}. Finally, Section \ref{sec: cont semimarts} contains the statement and elementary proof of Theorem \ref{thm: FTAP_cont}, which is a special case of Theorem \ref{thm: main} when the asset-price processes are continuous-path semimartingales.

\section{Arbitrages of the First Kind and Weakly Equivalent Local Martingale Measures} \label{sec: weak version of FTAP}

\subsection{General probabilistic remarks}
All stochastic processes in the sequel are defined on a filtered probability space $\left(\Omega, \, \F, \, (\F_t)_{t \in \Real_+}, \, \prob\right)$. Here, $\prob$ is a probability on $(\Omega, \F)$, where $\F$ is a $\sigma$-algebra that will make all involved random variables measurable. The filtration $(\F_t)_{t \in \Real_+}$ is assumed to satisfy the usual hypotheses of right-continuity and saturation by $\prob$-null sets. A finite financial planning horizon $T$ will be assumed. Here, $T$ is a $\prob$-a.s. finite stopping time and all processes will be assumed to be constant, and equal to their value they have at $T$, after time $T$. It will be assumed throughout that $\F_0$ is trivial modulo $\prob$ and that $\F_T = \F$.

\subsection{The market and investing}

Henceforth, $S$ will be denoting the \emph{discounted}, with respect to some baseline security, price process of a financial asset, satisfying:
\begin{equation} \label{S-MART} \tag{S-MART}
S \text{ is a nonnegative semimartingale.}
\end{equation}

Starting with capital $x \in \Real_+$, and investing according to some predictable and $S$-integrable strategy $\vartheta$, an economic agent's discounted wealth is given by the process
\begin{equation} \label{eq: wealth process, all}
X^{x, \vartheta} \dfn x + \int_0^\cdot \vartheta_t \ud S_t.
\end{equation}
In frictionless, continuous-time trading, credit constraints have to be imposed on investment in order to avoid doubling strategies. Define then $\X(x)$ to be the set of all wealth processes $X^{x, \vartheta}$ in the notation of \eqref{eq: wealth process, all} such that $X^{x, \vartheta} \geq 0$. Also, let $\X \dfn \bigcup_{x \in \Real_+} \X(x)$ denote the set of all nonnegative wealth processes.

\subsection{Arbitrages of the first kind}

The market viability notion that will be introduced now will be of central importance in our discussion.
\begin{defn} \label{dfn: arb first kind}
An $\F_T$-measurable random variable $\xi$ will be called an \textsl{arbitrage of the first kind} if $\prob[\xi \geq 0] = 1$, $\prob[\xi > 0] > 0$, and \emph{for all $x > 0$ there exists $X \in \X(x)$, which may depend on $x$, such that $\prob[X_T \geq \xi] = 1$}.

If there are no arbitrages of the first kind in the market, we say that condition NA$_1$ holds.
\end{defn}

In view of Proposition 3.6 from \cite{MR1304434}, condition NA$_1$ is weaker than condition NFLVR. In fact, condition NA$_1$ is exactly the same as condition ``No Unbounded Profit with Bounded Risk'' (NUPBR) of \cite{MR2335830}, as we now show.

\begin{prop} \label{prop: NA_one iff NUPBR}
Condition \NAone \ is equivalent to the requirement that the set $\set{X_T \such X \in \X(1)}$ is bounded in probability.
\end{prop}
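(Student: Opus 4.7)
The plan is to prove the two implications separately; the straightforward direction follows by scaling, while the harder one combines a Koml\'os-type convex-combination argument with an Egorov step.

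I argue contrapositively that boundedness in probability implies \NAone. Given any arbitrage of the first kind $\xi$, the hypothesis $\prob[\xi > 0] > 0$ lets me pick $\delta > 0$ with $\prob[\xi \geq \delta] > \delta$. For each $n$, Definition \ref{dfn: arb first kind} provides $X^n \in \X(1/n)$ with $\prob[X^n_T \geq \xi] = 1$, so the rescaling $n X^n \in \X(1)$ satisfies $\prob[(n X^n)_T \geq n\delta] > \delta$ for all $n$. Hence $\set{X_T \such X \in \X(1)}$ is unbounded in probability.

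For the converse, suppose $\set{X_T \such X \in \X(1)}$ is not bounded in probability. There exist $\eps > 0$ and $X^n \in \X(1)$ with $\prob[X^n_T \geq n] \geq \eps$. Rescale to $Y^n \dfn X^n / n \in \X(1/n)$, so that $\prob[Y^n_T \geq 1] \geq \eps$, and truncate by $\zeta^n \dfn Y^n_T \wedge 1 \in [0, 1]$, with $\expec[\zeta^n] \geq \eps$. Weak sequential compactness of the unit ball of $\Lb^2$ followed by Mazur's lemma (equivalently, Koml\'os's theorem) yields convex combinations $\tilde \zeta^n \in \mathrm{conv}(\zeta^n, \zeta^{n+1}, \ldots)$ converging $\prob$-a.s.\ to some $\zeta \in [0, 1]$; bounded convergence forces $\expec[\zeta] \geq \eps$, so $\prob[\zeta > 0] > 0$. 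Applying the same convex coefficients to the $Y^k$'s produces wealth processes $\tilde Y^n$ with starting capital $\sum_k \lambda_k^n / k \leq 1/n$ (since $k \geq n$) which, after augmentation by idle capital if necessary, lie in $\X(1/n)$ and satisfy $\tilde Y^n_T \geq \tilde \zeta^n$ pointwise.

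The main obstacle is that a.s.\ convergence $\tilde \zeta^n \to \zeta$ does not by itself yield a single nonzero $\F_T$-measurable $\xi$ dominated almost surely by the terminal values $\tilde Y^n_T$. I expect Egorov's theorem to bridge this gap. Choose $\delta > 0$ small enough that $\prob[\zeta > 2\delta] > 2\delta$, pick $\Omega_0 \in \F$ with $\prob[\Omega_0^c] < \delta$ on which $\tilde \zeta^n \to \zeta$ uniformly, and let $n_0$ be such that $\tilde \zeta^n \geq \zeta - \delta$ on $\Omega_0$ for all $n \geq n_0$. Setting $\xi \dfn (\zeta - \delta)_+ \indic_{\Omega_0}$ gives $\prob[\xi > 0] \geq \prob[\zeta > 2\delta] - \prob[\Omega_0^c] > \delta > 0$. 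For any $x > 0$, choosing $n \geq \max(n_0, 1/x)$ and augmenting $\tilde Y^n$ by idle capital to land in $\X(x)$ produces a wealth process whose terminal value dominates $\xi$ almost surely, so $\xi$ is an arbitrage of the first kind, contradicting \NAone.
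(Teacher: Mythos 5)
Your argument is correct in both directions; the easy implication (an arbitrage of the first kind forces unboundedness in probability via the scaling $\X(x) = x\X(1)$) is essentially identical to the paper's, but your construction of an arbitrage from unboundedness takes a genuinely different route. The paper exploits convexity of $\set{X_T \such X \in \X(1)}$ through Lemma 2.3 of \cite{MR1768009}, which produces a single set $\Omega_u$ of positive measure on which the family is \emph{hereditarily} unbounded; a summable union bound then yields a set $A = \bigcap_n \set{\tX^n_T > n} \cap \Omega_u$ of positive measure, and $\xi = \indic_A$ is immediately an arbitrage of the first kind, superreplicated by the explicit processes $(1/n)\tX^n$. You instead use convexity directly at the level of the wealth processes: truncating the terminal values, extracting forward convex combinations converging $\prob$-a.s.\ (Koml\'os, or weak $\Lb^2$-compactness plus Mazur --- if you go the Mazur route you should note that you first pass to a weakly convergent subsequence and then choose the convex weights so that the $\Lb^2$-errors are summable, to upgrade to a.s.\ convergence), and then using Egorov to carve out a deterministic lower bound $(\zeta - \delta)_+ \indic_{\Omega_0}$ that all sufficiently late combinations dominate. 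The bookkeeping is all sound: the convex combinations of the $Y^k$ with $k \geq n$ have initial capital at most $1/n$, remain nonnegative wealth processes, and dominate the corresponding $\tilde\zeta^n$ termwise, so every $x>0$ admits a superreplicating element of $\X(x)$. What the paper's approach buys is brevity, at the cost of invoking the hereditary-unboundedness lemma as a black box; what yours buys is self-containedness modulo entirely classical tools (Koml\'os/Mazur and Egorov), at the cost of the extra limiting object $\zeta$ and the uniformity step. Both hinge on the same structural fact, namely convexity of $\X(1)$.
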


\begin{proof}
Using the fact that $\X(x) = x \X(1)$ for all $x > 0$, it is straightforward to check that if an arbitrage of the first kind exists, then $\set{X_T \such X \in \X(1)}$ is not bounded in probability. Conversely, assume that $\set{X_T \such X \in \X(1)}$ is not bounded in probability. Since $\set{X_T \such X \in \X(1)}$ is further convex, Lemma 2.3 of \cite{MR1768009} implies the existence of $\Omega_u \in \F_T$ with $\prob[\Omega_u] > 0$ such that, for all $n \in \Natural$, there exists $\tX^n \in \X(1)$ with $\prob[\{\tX^n_T \leq  n \} \cap \Omega_u] \leq \prob[\Omega_u] / 2^{n+1}$. For all $n \in \Natural$, let $A^n = \indic_{\{\tX^n_T > n\}} \cap \Omega_u \in \F_T$. Then, set $A := \bigcap_{n \in \Natural} A^n \in \F_T$ and $\xi :=  \indic_A$. It is clear that $\xi$ is $\F_T$-measurable and that $\prob[\xi \geq 0] = 1$. Furthermore, since $A \subseteq \Omega_u$ and
\[
\prob \bra{\Omega_u \setminus A} = \prob \bra{\bigcup_{n \in \Natural} \pare{\Omega_u \setminus A^n}} \leq \sum_{n \in \Natural} \prob \bra{ \Omega_u \setminus A^n} = \sum_{n \in \Natural} \prob \bra{ \{ \tX^n_T \leq  n \} \cap \Omega_u} \leq \sum_{n \in \Natural} \frac{\prob[\Omega_u]}{2^{n+1}} = \frac{\prob[\Omega_u]}{2},
\]
we obtain $\prob[A] > 0$, i.e., $\prob[\xi > 0] > 0$. For all $n \in \Natural$ set $X^n := (1 / n) \tX^n$, and observe that $X^n \in \X(1 / n)$ and $\xi = \indic_{A} \leq \indic_{A^n} \leq X^n_T$ hold for all $n \in \Natural$. It follows that $\xi$ is and arbitrage of the first kind, which finishes the proof.
\end{proof}

%
%

\subsection{Weakly Equivalent Local Martingale Measures} \label{sec: WELMM}

The mathematical counterpart of the NA$_1$ condition involves a weakening of the concept of an ELMM. The appropriate notion involves measures that behave like probabilities, but are \emph{finitely additive} and only \emph{locally countably additive}.

In what follows, a \textsl{localizing sequence} will refer to a \emph{nondecreasing} sequence $(\tau^n)_{n \in \Natural}$ of stopping times such that $\uparrow \limn \prob [\tau^n \geq T] = 1$.

\subsubsection{Local probabilities weakly equivalent to $\prob$}

The concept that will be introduced below in Definition \ref{defn: local prob, weak equiv} is essentially a localization of countably additive probabilities.

\begin{defn} \label{defn: local prob, weak equiv}
A mapping $\qprob : \F \mapsto [0,1]$ is a \textsl{local probability weakly equivalent to $\prob$} if:
\begin{enumerate}
  \item $\qprob[\emptyset] = 0$, $\qprob[\Omega] = 1$, and $\qprob$ is (finitely) additive: $\qprob[A \cup B] = \qprob[A] + \qprob[B]$ whenever $A \in \F$ and $B \in \F$ satisfy $A \cap B = \emptyset$;
  \item for $A \in \F$, $\prob[A] = 0$ implies $\qprob[A] = 0$;
  \item there exists a localizing sequence $(\tau_n)_{n \in \Natural}$ such that, when restricted on $\F_{\tau_n}$, $\qprob$ is countably additive and equivalent to $\prob$, for all $n \in \Natural$. (Such sequence of stopping times will be called a \textsl{localizing sequence for $\qprob$}.)
\end{enumerate}
\end{defn}

Conditions (1) and (2) above imply that $\qprob$ is a positive element of the dual of $\Lb^\infty$, the space of (equivalence classes modulo $\prob$ of) $\F$-measurable random variable that are bounded modulo $\prob$ equipped with the essential-sup norm. The theory of finitely additive measures is developed in great detail in \cite{MR751777}; for our purposes here, mostly results from the Appendix of \cite{MR1841719}, as well as some material from \cite{MR2016601}, will be needed.

To facilitate the understanding, finitely additive positive measures that are not necessarily countably additive will be denoted using sans-serif typeface (like ``$\qprob$''), while for countably additive probabilities the blackboard bold typeface (like ``$\qprobb$'') will be used. As $\qprob$ will be in the dual of $\Lb^\infty$, $\inner{\qprob}{\xi}$ will denote the action of $\qprob$ on $\xi \in \Lb^\infty$.
The fact that $\qprob$ is a positive functional enables to extend the definition of $\inner{\qprob}{\xi}$ for $\xi \in \Lb^0$ with $\prob[\xi \geq 0] = 1$, via $\inner{\qprob}{\xi} \dfn \limn \inner{\qprob}{\xi \indic_{\set{\xi \leq n}}} \in [0, \infty]$. ($\Lb^0$ denotes the set of all $\prob$-a.s. finitely-valued random variables modulo $\prob$-equivalence equipped with the topology of convergence in probability.)

\begin{rem} \label{rem: weak vs strong}
In general, a finitely additive probability $\qprob: \F \mapsto [0,1]$ is called \textsl{weakly absolutely continuous with respect to $\prob$} if for each $A \in \F$ with $\prob[A] = 0$ we have $\qprob[A] = 0$. Furthermore, $\qprob$ is called \textsl{strongly absolutely continuous with respect to $\prob$} if for any $\epsilon > 0$ there exists $\delta = \delta(\epsilon) > 0$ such that $E \in \F$ and $\prob[E] < \delta$ implies $\qprob[E] < \epsilon$. It is clear that strong absolute continuity of $\qprob$ with respect to $\prob$ is a stronger requirement than weak absolutely continuity of $\qprob$ with respect to $\prob$. Actually, the two notions coincide when $\qprob$ is countable additive. Of course, similar definitions can be made with the roles of $\prob$ and $\qprob$ reversed. Then, $\prob$ and $\qprob$ are called \textsl{weakly} (respectively, \textsl{strongly}) \textsl{equivalent} if $\qprob$ is weakly (respectively, strongly)  absolutely continuous with respect to $\prob$ and $\prob$ is weakly (respectively, strongly)  absolutely continuous with respect to $\qprob$.

In Definition \ref{defn: local prob, weak equiv}, $\qprob$ was called a local probability ``weakly equivalent to $\prob$''; however, condition (2) only implies that $\qprob$ is weakly absolutely continuous with respect to $\prob$. We claim that $\prob$ is also weakly absolutely continuous with respect to $\qprob$. Indeed, let $\qprob$ satisfy (1) and (3) of Definition \ref{defn: local prob, weak equiv}. Pick any $A \in \F$ with $\qprob[A] = 0$. Since $A \cap \{ \tau_n \geq T\} \in \F_{\tau_n}$ for all $n \in \Natural$, $\qprob[A \cap \{ \tau_n \geq T\}] = 0$ implies that $\prob[A \cap \{ \tau_n \geq T\}] = 0$ by (3). Then, $\prob[A] = \, \uparrow \limn \prob[A \cap \{ \tau_n \geq T\}] = 0$.

Let $\qprob$ be a local probability weakly equivalent to $\prob$. When $\qprob$ is only finitely, but not countably,  additive, $\prob$ and $\qprob$ are not strongly equivalent, as we now explain. Write $\qprob = \qprob^r + \qprob^s$ for the unique decomposition of $\qprob$ into its regular and singular part. (The regular part $\qprob^r$ is countably additive, while the singular part $\qprob^s$ is purely finitely additive, meaning that there is no nonzero countably additive measure that is dominated by $\qprob^s$. One can check \cite{MR751777} for more information.) According to Lemma A.1 in \cite{MR1841719}, for all $\epsilon > 0$ one can find a set $A_\epsilon \in \F$ with $\prob[A_\epsilon] < \epsilon$ and $\qprob^s[A_\epsilon] = \qprob^s[\Omega]$; therefore $\qprob[A_\epsilon] \geq \qprob^s[\Omega]$. In other words, if $\qprob^s$ is nontrivial, then $\qprob$ is \emph{not} strongly absolutely continuous with respect to $\prob$. Note, however, that $\prob$ \emph{is} strongly absolutely continuous with respect to $\qprob$ in view of condition (3) of Definition \ref{defn: local prob, weak equiv}.
\end{rem}

We briefly digress from our main topic to give a simple criterion that connects the countable additivity of $\qprob$, a local probability weakly equivalent to $\prob$, with the strong equivalence between $\qprob$ and $\prob$, as the latter notion was introduced in Remark \ref{rem: weak vs strong} above.

\begin{prop}
Let $\qprob$ be a local probability weakly equivalent to $\prob$. The following are equivalent:
\begin{enumerate}
  \item $\qprob$ is countably additive, i.e., a true probability.
  \item $\qprob$ is strongly absolutely continuous with respect to $\prob$. 
  \item $\uparrow \lim_{n \to \infty} \qprob[\tau^n \geq T] = 1$ holds for \emph{any} localizing sequence $(\tau^n)_{n \in \Natural}$  for $\qprob$.
  \item $\uparrow \lim_{n \to \infty} \qprob[\tau^n \geq T] = 1$ holds for \emph{some} localizing sequence $(\tau^n)_{n \in \Natural}$ for $\qprob$.
\end{enumerate}
\end{prop}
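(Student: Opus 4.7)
The plan is to close the cycle of implications (1) $\Rightarrow$ (2) $\Rightarrow$ (3) $\Rightarrow$ (4) $\Rightarrow$ (1), with (4) $\Rightarrow$ (1) being the only substantive step. The implication (3) $\Rightarrow$ (4) is immediate since the existence of a localizing sequence for $\qprob$ is built into Definition~\ref{defn: local prob, weak equiv}. For (1) $\Rightarrow$ (2), once $\qprob$ is known to be countably additive, property (2) of Definition~\ref{defn: local prob, weak equiv} gives $\qprob \ll \prob$, so I would invoke Radon--Nikodym to write $\qprob[A] = \expec[f \indic_A]$ and then carry out the standard splitting at a large threshold $M$: given $\epsilon > 0$, choose $M$ so that $\expec[f \indic_{\{f > M\}}] < \epsilon/2$, and then $\delta = \epsilon/(2M)$ works. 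For (2) $\Rightarrow$ (3), fix any localizing sequence $(\tau^n)$ for $\qprob$; the definition of a localizing sequence forces $\prob[\tau^n < T] \downarrow 0$, and strong absolute continuity then gives $\qprob[\tau^n < T] \downarrow 0$.

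The substantive step is (4) $\Rightarrow$ (1). Given a localizing sequence $(\tau^n)_{n \in \Natural}$ with $\qprob[\tau^n \geq T] \uparrow 1$, it suffices to verify that $\qprob[A_k] \downarrow 0$ whenever $(A_k)_{k \in \Natural} \subset \F$ satisfies $A_k \downarrow \emptyset$. I would fix $\epsilon > 0$, pick $n$ with $\qprob[\tau^n < T] < \epsilon$, and split via finite additivity:
\[
\qprob[A_k] \leq \qprob[A_k \cap \{\tau^n \geq T\}] + \qprob[\tau^n < T] < \qprob[A_k \cap \{\tau^n \geq T\}] + \epsilon.
\]
The standard measurability fact that $A \cap \{T \leq \tau^n\} \in \F_{\tau^n}$ whenever $A \in \F_T$ places $A_k \cap \{\tau^n \geq T\}$ inside $\F_{\tau^n}$, where $\qprob$ is already countably additive by Definition~\ref{defn: local prob, weak equiv}(3). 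Since $A_k \cap \{\tau^n \geq T\} \downarrow \emptyset$ as $k \to \infty$, countable additivity on $\F_{\tau^n}$ forces $\qprob[A_k \cap \{\tau^n \geq T\}] \downarrow 0$; sending $k \to \infty$ and then $\epsilon \downarrow 0$ finishes the proof.

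I do not anticipate any deep obstacle. The only mild housekeeping is the measurability statement used above, which I would invoke as a standard property of the stopping-time sigma-algebras: for stopping times $\sigma \leq \tau$ and $A \in \F_\sigma$ one has $A \cap \{\sigma \leq \tau\} \in \F_\tau$, and specializing with $\sigma = T$ and $\tau = \tau^n$ delivers exactly what is needed. The overall strategy makes plain that the localization hypothesis does most of the work: finite additivity of $\qprob$ supplies the splitting, countable additivity on each $\F_{\tau^n}$ handles the ``captured'' part $A_k \cap \{\tau^n \geq T\}$, and hypothesis (4) ensures that the residual mass on $\{\tau^n < T\}$ can be made arbitrarily small uniformly in $k$.
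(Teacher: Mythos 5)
Your proposal is correct and takes essentially the same route as the paper: the easy implications are dispatched in the same cyclic order, and the key step (4) $\Rightarrow$ (1) uses exactly the paper's decomposition $\qprob[A_k] \leq \qprob[A_k \cap \{\tau^n \geq T\}] + \qprob[\tau^n < T]$ together with countable additivity of $\qprob$ on $\F_{\tau^n}$. The only cosmetic difference is that you phrase the limit via an explicit $\epsilon$ while the paper uses a $\limsup$, and you spell out the ``straightforward'' implications that the paper omits.
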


\begin{proof}
The implications (1) $\Rightarrow$ (2) $\Rightarrow$ (3) $\Rightarrow$ (4) are straightforward, so we only focus on the implication (4) $\Rightarrow$ (1). Let $(E^k)_{k \in \Natural}$ be a decreasing sequence of $\F$-measurable sets such that $\bigcap_{k \in \Natural} E^k = \emptyset$. We need show that $\downarrow \limk \qprob[E^k] = 0$. Consider the $\qprob$-localizing sequence $(\tau^n)_{n \in \Natural}$ of statement (4). For each $n \in \Natural$ and $k \in \Natural$ we have $E^k \cap \set{\tau^n \geq T} \in \F_{\tau^n}$. (Here, remember that $\F = \F_T$). This means that $\limsup_{k \to \infty} \qprob[E^k] \leq \qprob[\tau^n < T] + \limsup_{k \to \infty} \qprob[E^k \cap \set{\tau^n \geq T}] = \qprob[\tau^n < T]$, the last equality holding because $\qprob$ is countably additive on $\F_{\tau^n}$, for all $n \in \Natural$. Sending $n$ to infinity and using (4), we obtain the result.
\end{proof}

\subsubsection{Density processes} \label{subsec: density processes}

For a local probability weakly equivalent to $\prob$ as in Definition \ref{defn: local prob, weak equiv}, one can associate a strictly positive local $\prob$-martingale $Y^\qprob$, as will be now described. For all $n \in \Natural$, consider the $\prob$-martingale $Y^{\qprob, \, n}$ defined by setting
\[
Y^{\qprob, \, n}_\infty \equiv Y^{\qprob, \, n}_T \dfn \frac{\ud \pare{\qprob|_{\F_{\tau_n}}}}{ \ud \pare{\prob|_{\F_{\tau_n}}}}.
\]
It is clear that, $\prob$-a.s., $Y^{\qprob, \, n}_0 = 1$ and $Y^{\qprob, \, n}_T > 0$. Furthermore, for all $n \in \Natural
\setminus \set{0}$, $Y^{\qprob, \, n} = Y^{\qprob, \, n - 1}$ on the stochastic interval $\dbra{0, \tau_{n-1}}$. Therefore, patching the processes $(Y^{\qprob, \, n})_{n \in \Natural}$ together, one can define a local $\prob$-martingale $Y^{\qprob}$ such that, $\prob$-a.s., $Y^{\qprob}_0 = 1$ and $Y^{\qprob}_T > 0$.

\begin{rem}
A general result in \cite{MR2016601} shows that a \emph{supermartingale} $Y^\qprob$ can be associated to a finitely additive measure $\qprob$ that satisfies (1) and (2) of Definition \ref{defn: local prob, weak equiv}, but not necessarily (3). The construction of $Y^\qprob$ in \cite{MR2016601} is messier than the one provided above, exactly because condition (3) of Definition \ref{defn: local prob, weak equiv} is not assumed to hold. In the special case described here, the two constructions coincide.
\end{rem}

A partial converse of the above construction is also possible. To wit, start with some local $\prob$-martingale $Y$ such that, $\prob$-a.s., $Y_0 = 1$ and $Y_T > 0$. If $(\tau_n)_{n \in \Natural}$ is a localizing sequence for $Y$, one can define for each $n \in \Natural$ a probability $\qprobb^n$, equivalent to $\prob$ on $\F$, via the recipe $\ud \qprobb^n \dfn Y_{\tau_n} \ud \prob $. By Alaoglu's Theorem (see, for example, Theorem 6.25, page 250 of \cite{MR1717083}), the sequence $(\qprobb^n)_{n \in \Natural}$ has some cluster point $\qprob$ for the weak* topology on the dual of $\Lb^\infty$, which will be a finitely-additive probability. Proposition A.1 of \cite{MR1841719} gives that $\ud \qprob^r / \ud \prob = Y_T$. It is easy to see that $\qprob$ is a local probability weakly equivalent to $\prob$, as well as that $Y^{\qprob} = Y$. (Note that, again by Proposition A.1 of \cite{MR1841719}, the sequence $(\qprobb^n)_{n \in \Natural}$ might have several cluster points, but all will have the same regular part. Therefore, $\qprob$ is not uniquely defined, but it is always the case that $Y^{\qprob} = Y$.)

\subsubsection{Local martingales} When $\qprob$ is a local probability weakly equivalent to $\prob$ and fails to be countably additive, the concept of a $\qprob$-martingale, and therefore also of a local $\qprob$-martingale, is tricky to state. The reason is that existence of conditional expectations requires $\qprob$ to be countably additive in order to invoke the Radon-Nikod\'ym Theorem. To overcome this difficulty, we follow an alternative route. Let $\qprobb$ be a probability measure, equivalent to $\prob$. According to the optional sampling theorem (see, for example, \S1.3.C in \cite{MR1121940}), a \cadlag \ process $X$ is a local $\qprobb$-martingale if and only if there exists a localizing sequence $(\tau_n)_{n \in \Natural}$ such that $\inner{\qprobb}{X_{\tau^n \wedge \tau}}= X_0$ for all $n \in \Natural$ and all stopping times $\tau$. This characterization makes the following Definition \ref{dfn: local marts} plausible.

\begin{defn} \label{dfn: local marts}
Let $\qprob$ be a local probability weakly equivalent to $\prob$. A nonnegative \cadlag \ process   $X$ will be called a \textsl{local $\qprob$-martingale} if there exists a localizing sequence $(\tau_n)_{n \in \Natural}$ such that $\inner{\qprob}{X_{\tau^n \wedge \tau}}= X_0$ for all $n \in \Natural$ and all stopping times $\tau$.
\end{defn}

Now, a characterization of local $\qprob$-martingales in terms of density processes will be given. This extends the analogous result in the case where $\qprob$ is countably additive.

\begin{prop} \label{prop: Z Q-mart iff Y Z P-mart}
Let $\qprob$ be a local probability weakly equivalent to $\prob$ and let $Y^\qprob$ be defined as in \S \ref{subsec: density processes}. A nonnegative process $X$ is a local $\qprob$-martingale if and only if $Y^\qprob X$ is a local $\prob$-martingale.
\end{prop}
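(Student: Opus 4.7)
The plan is to reduce the equivalence to a single change-of-measure identity valid up to any element of a localizing sequence for $\qprob$. Let $(\tau_n)_{n \in \Natural}$ be a localizing sequence for $\qprob$ and, for each $n$, let $Y^{\qprob, n}$ be the uniformly integrable $\prob$-martingale from \S\ref{subsec: density processes}, which agrees with $Y^\qprob$ on $\dbra{0, \tau_n}$ and satisfies $Y^{\qprob, n}_T = Y^\qprob_{\tau_n}$. I would first establish the central claim that for every stopping time $\sigma$ with $\sigma \leq \tau_n$,
\[
\inner{\qprob}{X_\sigma} \eq \expec\bra{Y^\qprob_\sigma X_\sigma}.
\]
Since $X_\sigma$ is nonnegative and $\F_{\tau_n}$-measurable, applying the countably additive identity $\qprob|_{\F_{\tau_n}} = Y^\qprob_{\tau_n} \cdot \prob|_{\F_{\tau_n}}$ to the truncations $X_\sigma \indic_{\{X_\sigma \leq N\}}$ and letting $N \uparrow \infty$ yields $\inner{\qprob}{X_\sigma} = \expec[Y^\qprob_{\tau_n} X_\sigma]$; optional sampling for the uniformly integrable $\prob$-martingale $Y^{\qprob, n}$ at the stopping time $\sigma \leq \tau_n$ then replaces $Y^\qprob_{\tau_n}$ by $Y^{\qprob, n}_\sigma = Y^\qprob_\sigma$.

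With this identity in hand, both implications follow from a common-localizing-sequence trick. Let $(\sigma_m)_{m \in \Natural}$ be whichever localizing sequence is furnished by the hypothesis under consideration, and set $\rho_n \dfn \sigma_n \wedge \tau_n$. Then $(\rho_n)$ is nondecreasing with $\prob[\rho_n \geq T] \to 1$, and for any stopping time $\tau$ we have both $\rho_n \wedge \tau \leq \tau_n$ and $\rho_n \wedge \tau \leq \sigma_n$. The first inequality together with the central identity gives
\[
\inner{\qprob}{X_{\rho_n \wedge \tau}} \eq \expec\bra{(Y^\qprob X)_{\rho_n \wedge \tau}}.
\]
The second inequality, combined with the defining property of $(\sigma_m)$, forces one of the two sides --- the left one when $X$ is assumed a local $\qprob$-martingale, the right one when $Y^\qprob X$ is assumed a local $\prob$-martingale --- to equal $X_0$; hence so does the other. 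Since $n$ and $\tau$ are arbitrary, this simultaneously verifies Definition \ref{dfn: local marts} for $X$ under $\qprob$ in the ``$\Leftarrow$'' direction, and the optional-sampling characterization of $Y^\qprob X$ being a local $\prob$-martingale in the ``$\Rightarrow$'' direction, both with the common localizing sequence $(\rho_n)$.

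The main subtlety I expect is the finitely-additive integration of the potentially unbounded nonnegative random variable $X_\sigma$ in the central identity. This becomes routine precisely because $X_\sigma$ is $\F_{\tau_n}$-measurable, where $\qprob$ is genuinely countably additive; the definition $\inner{\qprob}{\xi} \dfn \limn \inner{\qprob}{\xi \indic_{\{\xi \leq n\}}}$ recorded after Definition \ref{defn: local prob, weak equiv} then interchanges with monotone convergence under $\prob$ without further work, and no Fatou-type estimate on the singular part of $\qprob$ is required.
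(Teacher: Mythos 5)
Your proof is correct and follows essentially the same route as the paper: the ``central identity'' $\inner{\qprob}{X_\sigma} = \expec[Y^\qprob_{\tau_n} X_\sigma] = \expec[Y^\qprob_\sigma X_\sigma]$ for $\sigma \leq \tau_n$ is exactly the chain of equalities the paper writes out (with $\sigma = \tau^n \wedge \tau$), and your explicit choice $\rho_n = \sigma_n \wedge \tau_n$ merely makes precise the paper's remark that the localizing sequence ``can be assumed to also localize $\qprob$''. No gaps; the handling of the unbounded nonnegative integrand via truncation on $\F_{\tau_n}$ is the right justification for the step the paper leaves implicit.
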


\begin{proof}
Start by assuming that $X$ is a local $\qprob$-martingale. Since $\inner{\qprob}{X_{\tau^n \wedge \tau}} = X_0$ for all $n \in \Natural$ and \emph{all} stopping times $\tau$, where $(\tau_n)_{n \in \Natural}$ is a localizing sequence, $(\tau_n)_{n \in \Natural}$ can be assumed to also localize $\qprob$. Then, since $X_{\tau^n \wedge \tau} \in \F_{\tau^n}$ for all $n \in \Natural$ and all stopping times $\tau$, and since $\qprobb^n \dfn \qprob |_{\F_{\tau^n}}$ is countably additive with $\ud \qprobb^n / (\ud \prob |_{\F_{\tau^n}}) = Y^\qprob_{\tau^n}$, it follows that
\[
Y^\qprob_0 X_0 = X_0 =  \inner{\qprob}{X_{\tau^n \wedge \tau}} = \expec[Y^\qprob_{\tau^n} X_{\tau^n \wedge \tau}] = \expec [ \expec[Y^\qprob_{\tau^n} \such \F_{\tau^n \wedge \tau}] X_{\tau^n \wedge \tau} ] = \expec[Y^\qprob_{\tau^n \wedge \tau} X_{\tau^n \wedge \tau}]
\]
for all $n \in \Natural$ and all stopping times $\tau$. This means that $Y^\qprob X$ is a local $\prob$-martingale.

Conversely, suppose that $Y^\qprob X$ is a local $\prob$-martingale. Let $(\tau_n)_{n \in \Natural}$ be a localizing sequence for both $Y^\qprob X$ and $\qprob$. Then, for all $n \in \Natural$ and all stopping times $\tau$,
\[
X_0 = Y^\qprob_0 X_0 = \expec[Y^\qprob_{\tau^n \wedge \tau} X_{\tau^n \wedge \tau}] = \expec [ \expec[Y^\qprob_{\tau^n} \such \F_{\tau^n \wedge \tau}] X_{\tau^n \wedge \tau} ] = \expec[Y^\qprob_{\tau^n} X_{\tau^n \wedge \tau}] = \inner{\qprob}{X_{\tau^n \wedge \tau}}.
\]
Therefore, $X$ is a local $\qprob$-martingale.
\end{proof}

\subsubsection{Weakly equivalent local martingale measures}

As will be shown in Theorem \ref{thm: main}, the following definition gives the mathematical counterpart of the market viability condition NA$_1$.

\begin{defn} \label{dfn: WELMM}
A \textsl{weakly equivalent local martingale measure} (WELMM) $\qprob$ is a local probability weakly equivalent to $\prob$ such that $S$ is a local $\qprob$-martingale.
\end{defn}

\begin{rem}[On the semimartingale property of $S$] \label{rem: on semimart property}
Under the assumption that $S$ is nonnegative, the existence of a WELMM \emph{enforces} the semimartingale property on $S$. Indeed, write $S = (1 / Y^\qprob)(Y^\qprob S)$, where $\qprob$ is a WELMM and $Y^\qprob$ is the density defined in \S\ref{subsec: density processes}. Since $Y^\qprob$ is a local $\prob$-martingale with $Y^\qprob_T > 0$, $\prob$-a.s., and $Y^\qprob S$ is also a local $\prob$-martingale, both $1 / Y^\qprob$ and $Y^\qprob S$ are semimartingales, which gives that $S$ is a semimartingale.

Semimartingales are essential in frictionless financial modeling. This has been made clear in Theorem 7.1 of \cite{MR1304434}, where it was shown that if $S$ is locally bounded and \emph{not} a semimartingale, condition NFLVR using only simple trading strategies fails. Furthermore, from the treatment in \cite{KarPla07} it follows that, if $S$ is nonnegative and \emph{not} a semimartingale, one can construct an arbitrage of the first kind, \emph{even} if one uses only \emph{no-short-sale} and \emph{simple} strategies.
\end{rem}

If $S$ satisfies \eqref{S-MART}, it is straightforward to check that a probability $\qprobb$ equivalent to $\prob$ is an ELMM if and only if each $X \in \X$ is a local $\qprob$-martingale.
The following result extends the last equivalence in the case of a WELMM.

\begin{prop} \label{prop: S is Q-mart iff X has Q-marts}
Let $\qprob$ be a local probability weakly equivalent to $\prob$. If $S$ satisfies \eqref{S-MART}, then $S$ is a local $\qprob$-martingale if and only if every process $X \in \X$ is a local $\qprob$-martingale.
\end{prop}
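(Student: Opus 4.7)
For the $(\Leftarrow)$ direction, I would simply observe that \eqref{S-MART} plus the triviality of $\F_0$ makes $S_0$ a nonnegative deterministic constant; the constant strategy $\vartheta \equiv 1$ then produces $X^{S_0, 1} = S$, which is nonnegative, so $S \in \X(S_0) \subseteq \X$. Hence the blanket local $\qprob$-martingale assumption on $\X$ applies in particular to $S$.

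The substantive direction is $(\Rightarrow)$. Writing $Y := Y^\qprob$, I would apply Proposition \ref{prop: Z Q-mart iff Y Z P-mart} twice, to convert both the hypothesis and the conclusion into statements about local $\prob$-martingales: the hypothesis becomes that $YS$ is a local $\prob$-martingale (with $Y$ itself automatically a local $\prob$-martingale by construction), while the target becomes that $YX$ is a local $\prob$-martingale for every $X = x + \vartheta \cdot S \in \X$. The problem thus reduces to a purely classical one: given two local $\prob$-martingales $Y$ and $YS$, show that any nonnegative process of the form $Y(x + \vartheta \cdot S)$ is itself a local $\prob$-martingale.

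For this reduction I would invoke integration by parts twice. Applied to $YX$, it yields
\begin{equation*}
YX \, = \, x + X_- \cdot Y + Y_- \vartheta \cdot S + \vartheta \cdot [Y, S],
\end{equation*}
while applied to $YS$, it gives $Y_- \cdot S + [Y, S] = YS - S_0 - S_- \cdot Y$. Combining the two identities and using associativity of the stochastic integral yields
\begin{equation*}
YX \, = \, x + (X_- - \vartheta S_-) \cdot Y + \vartheta \cdot (YS),
\end{equation*}
which realises $YX - x$ as a stochastic integral of the predictable integrand $(X_- - \vartheta S_-, \, \vartheta)$ against the two-dimensional local $\prob$-martingale $(Y, YS)$. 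Since $YX \geq 0$, this stochastic integral is bounded below by the constant $-x$, and the Ansel--Stricker theorem then yields that $YX$ is a local $\prob$-martingale, as required.

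The step I expect to be delicate is the derivation of the last displayed identity when $S$ is not locally bounded: individually the integrals $S_- \cdot Y$ and $\vartheta S_- \cdot Y$ need not be local martingales, so associativity of the stochastic integral has to be justified for the combined expression, using that $\vartheta$ is $S$-integrable and $Y_-$ is predictable and locally bounded. Once the two-dimensional integrand is assembled, however, the lower bound from $YX \geq 0$ turns the Ansel--Stricker conclusion into an essentially immediate application.
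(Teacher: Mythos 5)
Your proposal is correct and follows essentially the same route as the paper: the paper's proof of the forward direction derives exactly the identity $Y^\qprob X^{x,\vartheta} = x + \int_0^\cdot \pare{X^{x,\vartheta}_{t-} - \vartheta_t S_{t-}} \ud Y^\qprob_t + \int_0^\cdot \vartheta_t \ud (Y^\qprob S)_t$ via integration by parts and concludes with the Ansel--Stricker theorem, while the reverse direction is the same one-line observation that $S \in \X$. Your additional care about justifying the associativity step when $S$ is not locally bounded is a legitimate point that the paper passes over silently, but it does not change the argument.
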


\begin{proof}
Start by assuming that $S$ is a local $\qprob$-martingale. For $x \in \Real_+$, let $X^{x, \vartheta}$ in the notation of \eqref{eq: wealth process, all} be a wealth process in $\X(x)$. A use of the integration-by-parts formula gives
\[
Y^\qprob X^{x, \vartheta} = x + \int_0^\cdot \pare{X^{x, \vartheta}_{t -} - \vartheta_t S_{t -} }\ud Y^\qprob_t + \int_0^\cdot \vartheta_{t} \ud (Y^\qprob S)_t
\]
It follows that $Y^\qprob X^{x, \vartheta}$ is a positive martingale transform under $\prob$, and therefore a local $\prob$-martingale by the Ansel-Stricker Theorem (see \cite{MR1277002}).

Now, assume that every process in $\X$ is a local $\qprob$-martingale. Since $S \in \X$, $S$ is a local $\qprob$-martingale.
\end{proof}

\begin{rem}
Let $\qprob$ be a local probability weakly equivalent to $\prob$.
Proposition \ref{prop: Z Q-mart iff Y Z P-mart} combined with  Proposition \ref{prop: S is Q-mart iff X has Q-marts} imply that  $\qprob$ is a WELMM if and only if $Y^\qprob X$ is a local $\prob$-martingale for all $X \in \X$. In other words, the process $Y^\qprob$ is a \textsl{strict martingale density} in the terminology of \cite{Schwe95} (see also \cite{MR1651229}).
\end{rem}

\subsection{The main result} After the preparation of the previous sections, it is possible to state Theorem \ref{thm: main} below, which can be seen as a weak version of the FTAP in \cite{MR1304434}.

\begin{thm} \label{thm: main}
Suppose that $S$ satisfies \eqref{S-MART}. Then, there are no arbitrages of the first kind in the market if and only if a weakly equivalent local martingale measure exists.
\end{thm}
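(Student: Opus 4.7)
I will handle the two implications separately, relying on the density-process dictionary built in \S\ref{subsec: density processes} together with the two characterization propositions already in hand.

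\emph{Sufficiency} (WELMM $\Rightarrow$ NA$_1$). Suppose a WELMM $\qprob$ exists, with associated density process $Y^\qprob$. By Proposition \ref{prop: Z Q-mart iff Y Z P-mart} combined with Proposition \ref{prop: S is Q-mart iff X has Q-marts}, for every $X \in \X(1)$ the process $Y^\qprob X$ is a nonnegative local $\prob$-martingale, hence a $\prob$-supermartingale with $Y^\qprob_0 X_0 \leq 1$. In particular $\expec[Y^\qprob_T X_T] \leq 1$. Since $Y^\qprob_T > 0$ $\prob$-a.s., a standard splitting argument on $\{Y^\qprob_T \geq \varepsilon\}$ gives that $\{X_T \such X \in \X(1)\}$ is bounded in probability, and then Proposition \ref{prop: NA_one iff NUPBR} yields condition NA$_1$.

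\emph{Necessity} (NA$_1$ $\Rightarrow$ WELMM). The plan here has two stages. First, invoke the main theorem of \cite{Kar09a} (legitimate by Proposition \ref{prop: NA_one iff NUPBR}, which identifies NA$_1$ with NUPBR): condition NA$_1$ produces a strictly positive local $\prob$-martingale $Y$ with $Y_0 = 1$ and $Y_T > 0$ $\prob$-a.s., such that $Y X$ is a local $\prob$-martingale for every $X \in \X$. Second, convert $Y$ into a WELMM using the construction sketched at the end of \S\ref{subsec: density processes}: pick a localizing sequence $(\tau_n)$ for $Y$, set $\ud \qprobb^n \dfn Y_{\tau_n} \ud \prob$, and extract a weak* cluster point $\qprob$ of $(\qprobb^n)_{n \in \Natural}$ in the dual of $\Lb^\infty$ via Alaoglu's Theorem. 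Proposition A.1 of \cite{MR1841719} ensures that $\ud \qprob^r / \ud \prob = Y_T$ and that $\qprob$ restricted to each $\F_{\tau_n}$ coincides with $\qprobb^n$; hence $\qprob$ is a local probability weakly equivalent to $\prob$ with $Y^\qprob = Y$. Since $Y^\qprob X = Y X$ is a local $\prob$-martingale for all $X \in \X$, Proposition \ref{prop: Z Q-mart iff Y Z P-mart} makes every $X \in \X$ a local $\qprob$-martingale, and Proposition \ref{prop: S is Q-mart iff X has Q-marts} (applied with $S \in \X$) then certifies $\qprob$ as a WELMM.

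\emph{Main obstacle.} The genuinely hard content is packed into the first stage of the necessity direction, namely the passage from NA$_1$ to the existence of the strictly positive local martingale deflator $Y$. This is exactly the result of \cite{Kar09a} the paper wishes to treat as a black box at this point (and which Section \ref{sec: cont semimarts} re-establishes by elementary means in the continuous-path case). Once $Y$ is in hand, the remainder is soft functional analysis: the Alaoglu extraction produces only a finitely additive $\qprob$ in general, but this is precisely why the WELMM concept, rather than an honest ELMM, is the correct target. Checking the three items of Definition \ref{defn: local prob, weak equiv} for the cluster point $\qprob$ is the bookkeeping part: additivity and the bound $\qprob[\Omega] = 1$ are inherited from the $\qprobb^n$, weak absolute continuity with respect to $\prob$ follows because each $\qprobb^n$ is equivalent to $\prob$ on $\F_{\tau_n}$, and $(\tau_n)$ itself furnishes the required localizing sequence for $\qprob$.
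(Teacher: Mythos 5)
Your proposal is correct and follows essentially the same route as the paper: the hard content is delegated to Theorem 1.1 of \cite{Kar09a}, and the translation between the deflator $Y$ and the finitely additive measure $\qprob$ goes through the construction of \S\ref{subsec: density processes} together with Propositions \ref{prop: Z Q-mart iff Y Z P-mart} and \ref{prop: S is Q-mart iff X has Q-marts}. The only (harmless) deviation is that for sufficiency you argue boundedness in probability directly from the supermartingale bound $\expec[Y^\qprob_T X_T] \leq 1$ rather than invoking the converse direction of the cited equivalence, which amounts to the same thing.
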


\begin{proof}
By Theorem 1.1 in \cite{Kar09a}, condition NA$_1$ is equivalent to the existence of a nonnegative \cadlag \ process $Y$ with $Y_0 = 1$, $Y_T > 0$, and such that $Y X$ is a local $\prob$-martingale for all $X \in \X$. Then, using also the discussion in \S \ref{subsec: density processes} and Proposition \ref{prop: Z Q-mart iff Y Z P-mart}, NA$_1$ holds if and only if there exists a local probability $\qprob$, weakly equivalent to $\prob$, such that $X$ is a local $\qprob$-martingale for all $X \in \X$. Proposition \ref{prop: S is Q-mart iff X has Q-marts} gives that $\qprob$ is a WELMM, which completes the proof.
\end{proof}

\begin{rem} \label{rem: proving main using DS}
\emph{If} the statement of the FTAP of \cite{MR1671792} is assumed, one can provide a proof of Theorem \ref{thm: main} using the ``change of \num'' technique of \cite{MR1381678}; a similar approach has been taken up in \cite{MR2284490}. We opt here to prove Theorem \ref{thm: main} \emph{directly}, using the result of \cite{Kar09a} that is not relying on previous heavy results. Then, the classical FTAP itself becomes a corollary, as we shall see in Section \ref{sec: the FTAP of K-D-S} below. There is no claim that the path followed here is shorter or less arduous than the one taken up in \cite{MR1671792}, but certainly it has different focus.
\end{rem}

\begin{rem} \label{rem: weaken ass}
As can be seen from its proof, Theorem \ref{thm: main} still holds if the nonnegativity assumption on $S$ is removed, as long as we agree to reformulate the notion of a WELMM $\qprob$, asking that each $X \in \X$ is a local $\qprob$-martingale.

Furthermore, Theorem \ref{thm: main} also holds without the assumption that $S$ is one-dimensional. Indeed, in Remark \ref{rem: proving main using DS} above it was discussed that Theorem \ref{thm: main} can be seen as a consequence of the FTAP in \cite{MR1671792}, which does not require $S$ to be one-dimensional. Unfortunately, in \cite{Kar09a} the assumption that $S$ is one-dimensional is being made, mostly in order to avoid immense technical difficulties in the proof of Theorem 1.1 there, which is used to prove Theorem \ref{thm: main} above.
\end{rem}

\begin{rem}
Undoubtedly, the notion of a WELMM is more complicated than that of an ELMM. However, checking the existence of a WELMM is fundamentally easier than checking whether an ELMM exists for the market. Indeed, in view of Theorem \ref{thm: main}, existence of a WELMM is equivalent to the existence of the \num \ portfolio in the market. For checking the existence of the latter, there exists a necessary and sufficient criterion in terms of the predictable characteristics of the discounted asset-price process, as was shown in \cite{MR2335830}. The details are rather technical, but if the asset-price process has continuous paths the situation is very simple --- see Section \ref{sec: cont semimarts} later.
\end{rem}

\section{The FTAP of Delbaen and Schachermayer} \label{sec: the FTAP of K-D-S}

In this subsection, a proof of the FTAP as appears in \cite{MR1304434} is given using the already-developed tools. Also, the $\qprob$-supermartingale property of wealth processes in $\X$ when $\qprob$ is a WELMM is examined, and it is shown that the latter property holds only under the existence of an ELMM.
\subsection{Proving the FTAP}
In the notation of the present paper, the main technical difficulty for proving the FTAP in \cite{MR1304434} is showing that the set $\set{g \in \Lb^0 \such 0 \leq g \leq X_T \text{ for some } X \in \X(1)}$ is closed in probability under the NFLVR condition. This implies the weak* closedness of the set of bounded superhedgeable claims starting from zero capital and therefore allows for the use of the Kreps-Yan separation theorem (see \cite{MR611252} and \cite{MR580127}) in order to conclude the existence of a separating measure.

There is a way to establish the aforementioned closedness in probability using Theorem \ref{thm: main} and some additional well-known results. In fact, a seemingly stronger statement than the one in \cite{MR1304434} will now be stated and proved.

\begin{thm} \label{thm closedness in L0}
Under the assumption that no arbitrages of the first kind are present in the market, the set $\set{g \in \Lb^0 \such 0 \leq g \leq X_T \text{ for some } X \in \X(1)}$ is closed in probability.
\end{thm}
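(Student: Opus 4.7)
The plan is to exploit Theorem \ref{thm: main} to reduce the $\Lb^0$-closedness of $C := \set{g \in \Lb^0 \such 0 \leq g \leq X_T \text{ for some } X \in \X(1)}$ to a compactness question for a bounded family of nonnegative $\prob$-supermartingales. Applying Theorem \ref{thm: main}, pick a WELMM $\qprob$ and let $Y := Y^\qprob$ be the associated strict martingale density of \S\ref{subsec: density processes}; by Propositions \ref{prop: Z Q-mart iff Y Z P-mart} and \ref{prop: S is Q-mart iff X has Q-marts}, $YX$ is a nonnegative $\prob$-supermartingale for every $X \in \X$. This single deflator is the linchpin that converts all questions about the wealth-process set $\X(1)$ into questions about nonnegative $\prob$-supermartingales, for which familiar convex-compactness machinery is available.

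Given $(g^n)_{n \in \Natural} \subset C$ with $g^n \to g$ in probability, pick $X^n \in \X(1)$ with $0 \leq g^n \leq X^n_T$. By Proposition \ref{prop: NA_one iff NUPBR}, $\set{X^n_T \such n \in \Natural}$ is bounded in probability, so a Koml\'os-type lemma (as in Lemma A1.1 of \cite{MR1304434}) delivers convex combinations $\widehat{X}^n \in \mathrm{conv}\set{X^n, X^{n+1}, \ldots}$ with $\widehat{X}^n_T \to \xi$ $\prob$-a.s. for some finite $\xi \in \Lb^0$ with $\xi \geq 0$; the corresponding convex combinations of the $g^n$ still converge in probability to $g$, so $g \leq \xi$ $\prob$-a.s. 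Since each $Y \widehat{X}^n$ is a nonnegative $\prob$-supermartingale, the Fatou convergence lemma for nonnegative supermartingales (Lemma 5.2 of \cite{MR1304434}) yields, along a further subsequence, a nonnegative $\prob$-supermartingale $Z$ obtained as the Fatou limit of $(Y \widehat{X}^n)$ along the rationals, with $Z_T = Y_T \xi$ $\prob$-a.s. Setting $V := Z/Y$, well-defined because $Y$ is strictly positive throughout $[0,T]$ $\prob$-a.s., produces a nonnegative \cadlag \ process with $V_0 \leq 1$, $V_T \geq g$, and such that $YV = Z$ is a $\prob$-supermartingale; the same Fatou argument applied with an arbitrary strict martingale density in place of $Y$ shows that $V$ is a supermartingale under every strict martingale density.

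The concluding step is the optional decomposition theorem in the form valid under NA$_1$: the process $V$ admits a representation $V = V_0 + \int_0^\cdot \vartheta_t \ud S_t - L$, with $\vartheta$ predictable and $S$-integrable and $L$ adapted, nondecreasing, and null at zero. Defining $X^\ast := V_0 + \int_0^\cdot \vartheta_t \ud S_t = V + L$ gives $X^\ast \geq V \geq 0$, so $X^\ast \in \X(V_0) \subseteq \X(1)$ and $X^\ast_T \geq V_T \geq g$, whence $g \in C$. The main obstacle is precisely this last optional decomposition step, which must be carried out in the absence of a genuine ELMM; Theorem \ref{thm: main} together with the underlying result of \cite{Kar09a} are exactly what make this admissible, and it is this substitution of a WELMM for an ELMM that allows the closedness conclusion to be drawn under NA$_1$ rather than the stronger NFLVR.
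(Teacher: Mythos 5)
Your argument is correct and follows essentially the same route as the paper's proof: deflate by the strict martingale density supplied by Theorem \ref{thm: main}, pass to Fatou limits of convex combinations of the deflated nonnegative supermartingales, verify the supermartingale property under every deflator in $\Y(1)$, and conclude with the optional decomposition theorem in the form valid under NA$_1$. The only cosmetic differences are that the paper packages your Koml\'os and Fatou steps into a single appeal to Lemma 5.2(1) of \cite{MR1469917} and phrases the conclusion via membership in the set $\Xxx = \X(1)\,\V$ rather than via the explicit decomposition $X^\ast = V + L$.
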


\begin{proof}
Define $\V$ to be the class of nonnegative, adapted, \cadlag, \emph{nonincreasing} processes with $V_0 \leq 1$. Then, set\footnote{The notation ``$\Xxx$'' is borrowed from \cite{MR1883202} since it is \emph{suggestive} of the fact that $\Xxx$ is the process-bipolar of $\X(1)$, as is defined in \cite{MR1883202}. Note, however, that it actually remains to show that $\X(1)$ is closed in probability to actually have that bipolar relationship.} $\Xxx \dfn \X(1) \, \V \, = \, \{X V  \such X \in \X(1) \text{ and } V \in \V \}$. The statement of the Theorem can be reformulated to say that the \emph{convex} set $\{\xi_T \such \xi \in \Xxx \}$ is closed in $\Lb^0$. Consider therefore a sequence $(\xi^n)_{n \in \Natural}$ such that $\plimn \xi^n_T = \zeta$. It will be shown below that there exists $\xi^\infty \in \Xxx$ such that $\xi^\infty_T = \zeta$.

In what follows in the proof, the concept of Fatou-convergence is used, which will now be recalled. Define $\Dya \dfn \set{ k / 2^m \such k \in \Natural, \, m \in \Natural}$ to be the set of dyadic rational numbers in $\Real_+$. A sequence $(Z^n)_{n \in \Natural}$   of nonnegative \cadlag \ processes \textsl{Fatou-converges} to $Z^\infty$ if
\[
Z^\infty_t = \limsup_{\Dya \ni s \downarrow t} \Big( \limsup_{n \to \infty} Z^n_s \Big) = \liminf_{\Dya \ni s \downarrow t} \Big( \liminf_{n \to \infty} Z^n_s \Big)
\]
holds $\prob$-a.s. for all $t \in \Real_+$. Note that, since all processes are assumed to be constant after time $T$, for any $t \geq T$ the above relationship simply reads $Z^\infty_T = \limn Z^n_T$, $\prob$-a.s.

From Theorem \ref{thm: main} and Proposition \ref{prop: Z Q-mart iff Y Z P-mart}, under absence of arbitrages of the first kind in the market, there exists some nonnegative process $\oY$ with $\oY_0 = 1$ and $\oY_T > 0$, $\prob$-a.s., such that $\oY X$ is a local $\prob$-martingale for all $X \in \X(1)$. Then, $\oY \xi$ is a nonnegative $\prob$-supermartingale for all $\xi \in \Xxx$. Since $(\oY \xi^n)_{n \in \Natural}$ is a sequence of nonnegative $\prob$-supermartingales with $\oY_0 \xi^n_0 \leq 1$, Lemma 5.2(1) of \cite{MR1469917} gives the existence of a sequence $(\oxi^n)_{n \in \Natural}$ such that $\oxi^n$ is a convex combination of $\xi^n, \xi^{n+1}, \ldots$ for each $n \in \Natural$ (and therefore $\oxi^n \in \Xxx$ for all $n \in \Natural$, since $\Xxx$ is convex), and such that $(\oY \oxi^n)_{n \in \Natural}$ Fatou-converges to some nonnegative $\prob$-supermartingale $Z$. Obviously, $Z_0 \leq 1$. Also, since $\plimn (\oY_T \xi_T^n) = \oY_T \zeta$, one gets $Z_T = \oY_T \zeta$. Define $\xi^\infty := Z / \oY$. Then, $(\oxi^n)_{n \in \Natural}$ Fatou-converges to $\xi^\infty$ and $\xi^\infty_T = \zeta$. The last line of business is to show that $\xi^\infty \in \Xxx$.

First of all, $\xi^\infty_0 \leq 1$ and $\xi^\infty$ is nonnegative. Let $\Y(1)$ be the class of all nonnegative process $Y$ with $Y_0 = 1$, $\prob$-a.s., such that $Y X$ is a $\prob$-supermartingale for all $X \in \X(1)$. Of course, for all $Y \in \Y(1)$ and all $\xi \in \Xxx$, $Y \xi$ is a $\prob$-supermartingale. It follows that $Y \oxi^n$ is a nonnegative $\prob$-supermartingale for all $n \in \Natural$. Since, for any $Y \in \Y(1)$, $(Y \oxi^n)_{n \in \Natural}$ Fatou-converges to $Y \xi^\infty$, using Fatou's lemma one gets that $Y \xi^\infty$ is also a $\prob$-supermartingale for all $Y \in \Y(1)$. Since there exists a local $\prob$-martingale in $\oY \in \Y(1)$ with $\oY_T > 0$, $\prob$-a.s., the optional decomposition theorem as appears in \cite{MR1804665} implies that $\xi^\infty \in \Xxx$.
\end{proof}

\subsection{NFLVR and the supermartingale property of wealth processes under a WELMM}

We now move to another characterization of the NFLVR condition using the concept of WELMMs. We start with a simple observation. If $\qprobb$ is a probability measure equivalent to $\prob$, it is straightforward to check that all $X \in \X$ are $\qprobb$-supermartingales if and only if $\inner{\qprobb}{X_T} \leq X_0$ for all $X \in \X$. Consider now an ELMM $\qprobb$. Since nonnegative local $\qprobb$-martingales are $\qprobb$-supermartingales, every $X \in \X$ is a $\qprobb$-supermartingale; therefore, $\inner{\qprobb}{X_T} \leq X_0$ for all $X \in \X$. One wonders, \emph{does the last property hold when $\qprobb$ is replaced by a \WELMM \ $\qprob$?}

Before we state and prove a result along the lines of the above discussion, some terminology will be introduced. A mapping $\qprob: \F \mapsto [0,1]$ will be called a \textsl{weakly equivalent finitely additive probability} if (1) and (2) of Definition \ref{defn: local prob, weak equiv} hold, as well as, $\prob$-a.s., $\ud \qprob^r / \ud \prob > 0$. Obviously, a local probability weakly equivalent to $\prob$ is a weakly equivalent finitely additive probability. A \textsl{separating weakly equivalent finitely additive probability} is a weakly equivalent finitely additive probability $\qprob$ such that
$\inner{\qprob}{X_T} \leq X_0$ for all $X \in \X$. We can then think of the processes $X \in \X$ as being $\qprob$-supermartingales. In accordance to the discussion above, the natural question that comes into mind is: when can we find a separating WELMM separating? In loose terms: \emph{can we find a \WELMM \ $\qprob$ such that all elements of $\X$ $\qprob$-supermartingales?} The answer, given in Theorem \ref{thm: WELMMS ans SEP-MEAS} below, is that this \emph{only} happens under the NFLVR condition.

\begin{thm} \label{thm: WELMMS ans SEP-MEAS}
The following are equivalent:
\begin{enumerate}
  \item The market satisfies the \NFLVR \ condition.
  \item There exists an \ELMM \ $\qprobb$.
  \item There exists a separating weakly equivalent finitely additive probability.
\end{enumerate}
\end{thm}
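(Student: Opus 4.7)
The plan is to close the cycle $(2) \Rightarrow (3) \Rightarrow (1) \Rightarrow (2)$. The implication $(2) \Rightarrow (3)$ is essentially immediate: an \ELMM\ $\qprobb$ is countably additive and equivalent to $\prob$, so its regular part is $\qprobb$ itself and $\ud \qprobb / \ud \prob > 0$, $\prob$-a.s. Every $X \in \X$ is a nonnegative local $\qprobb$-martingale, hence (by the standard Fatou-lemma argument) a $\qprobb$-supermartingale, so $\inner{\qprobb}{X_T} \leq X_0$ for all $X \in \X$, and $\qprobb$ qualifies as a separating weakly equivalent finitely additive probability.

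For $(3) \Rightarrow (1)$, the one computational lemma I need is: if $\qprob$ is a weakly equivalent finitely additive probability, then $\inner{\qprob}{\xi} > 0$ for every nonnegative $\F$-measurable $\xi$ with $\prob[\xi > 0] > 0$. Indeed, the decomposition $\qprob = \qprob^r + \qprob^s$ together with the monotone convergence defining $\inner{\qprob}{\xi}$ yield $\inner{\qprob}{\xi} \geq \inner{\qprob^r}{\xi} = \expec[(\ud \qprob^r / \ud \prob) \xi] > 0$, using that $\ud \qprob^r / \ud \prob > 0$, $\prob$-a.s. I would then verify classical no-arbitrage and \NAone\ separately; their conjunction is \NFLVR\ by Kabanov's well-known decomposition. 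If $X \in \X(0)$ had $X_T$ an arbitrage, the lemma gives $\inner{\qprob}{X_T} > 0$, contradicting $\inner{\qprob}{X_T} \leq X_0 = 0$. If $\xi$ were an arbitrage of the first kind, then for each $n \in \Natural$ one can pick $X^n \in \X(1/n)$ with $X^n_T \geq \xi$, whence $\inner{\qprob}{\xi} \leq \inner{\qprob}{X^n_T} \leq 1/n$, giving $\inner{\qprob}{\xi} = 0$ and again contradicting the lemma.

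The implication $(1) \Rightarrow (2)$ is the classical FTAP of \cite{MR1304434}, and is the main obstacle. The heavy lifting is already done in Theorem \ref{thm closedness in L0}, which provides closedness in probability of the set $\{g \in \Lb^0 \such 0 \leq g \leq X_T \text{ for some } X \in \X(1)\}$. From here one follows the route of \cite{MR1304434}: using the classical no-arbitrage half of \NFLVR, this closedness in probability promotes, via a Krein-Smulian style argument, to weak* closedness in $\Lb^\infty$ of the cone of bounded claims superhedgeable from zero wealth; the Kreps-Yan separation theorem then delivers a probability $\qprobb$, equivalent to $\prob$, under which every $X \in \X$ is a supermartingale. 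Since $S \geq 0$ and $S \in \X$, this supermartingale property in fact forces $S$ to be a local $\qprobb$-martingale, so $\qprobb$ is an \ELMM.
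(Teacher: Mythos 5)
Your arrows $(2)\Rightarrow(3)$ and $(3)\Rightarrow(1)$ match the paper's proof in substance. The positivity lemma you isolate ($\inner{\qprob}{\xi}\geq\inner{\qprob^r}{\xi}=\expec[(\ud\qprob^r/\ud\prob)\,\xi]>0$ for nonnegative $\xi$ with $\prob[\xi>0]>0$) is exactly the mechanism the paper uses; the paper checks boundedness in probability of $\set{X_T \such X \in \X(1)}$ via $\qprob^r$ rather than attacking Definition \ref{dfn: arb first kind} directly, but by Proposition \ref{prop: NA_one iff NUPBR} the two are interchangeable. One correction: the classical no-arbitrage condition you must verify (the paper's condition (b), coming from Proposition 3.6 of \cite{MR1304434}) concerns \emph{every} $X\in\X$ with $\prob[X_T\geq X_0]=1$, not only $X\in\X(0)$. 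If $X=x+\vartheta\cdot S\geq 0$ with $x>0$, the process $\vartheta\cdot S$ need not be nonnegative, so it need not lie in $\X(0)$, and your check misses it. The repair is immediate with your own lemma ($\inner{\qprob}{X_T}\geq X_0+\inner{\qprob}{X_T-X_0}>X_0$ contradicts separation), but as written the statement you verify is strictly weaker than NA.

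The genuine gap is the final step of $(1)\Rightarrow(2)$. Kreps--Yan applied after Theorem \ref{thm closedness in L0} produces a probability $\qprobb\sim\prob$ under which every $X\in\X$ is a supermartingale, i.e.\ a \emph{separating} measure; but the sentence ``since $S\geq 0$ and $S\in\X$, the supermartingale property forces $S$ to be a local $\qprobb$-martingale'' is not an argument. A nonnegative supermartingale is of course not in general a local martingale, and upgrading a separating measure to a (local) martingale measure is precisely the second hard step of the FTAP, not a formality. For \emph{locally bounded} $S$ one runs the two-sided stopping argument of \cite{MR1304434} (both $\pm\indic_A\indic_{(\sigma,\tau]}$ are admissible when $S^\tau$ is bounded), but a nonnegative semimartingale need not be locally bounded from above, so that route is unavailable here; in the unbounded case one needs the $\sigma$-martingale theorem of \cite{MR1671792}, where it is shown only that $\sigma$-martingale measures are \emph{dense} in the set of separating measures --- the particular measure handed to you by Kreps--Yan need not itself be an ELMM. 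This is exactly why the paper proves $(1)\Rightarrow(2)$ by citing \cite{MR1671792} together with the Ansel--Stricker fact that nonnegative $\sigma$-martingales are local martingales \cite{MR1277002}, rather than rederiving it from Theorem \ref{thm closedness in L0}. Your route can be completed, but only by importing that density/upgrading argument (or by adding a local boundedness hypothesis); the supermartingale property alone does not close it.
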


\begin{proof}
We prove $(1) \Rightarrow (3)$, $(3) \Rightarrow (2)$, and $(2) \Rightarrow (1)$ below.

\smallskip

\noindent $\mathbf{(1) \Rightarrow (2).}$ This is a consequence of \cite{MR1671792} and the fact that nonnegative $\sigma$-martingales are local martingales --- see \cite{MR1277002}.

\smallskip

\noindent $\mathbf{(2) \Rightarrow (3).}$ An ELMM is a separating weakly equivalent finitely additive probability.

\smallskip

\noindent $\mathbf{(3) \Rightarrow (1).}$ In view of Proposition 3.6 of \cite{MR1304434} and Proposition 1.3 proved previously in the present paper, condition NFLVR is equivalent to showing that (a) $\set{X_T \such X \in \X(1)}$ is bounded in probability, and (b) If $\prob[X_T \geq X_0] = 1$ for some $X \in \X$, then $\prob[X_T > X_0] = 0$. For (a), observe that
\[
\sup_{X \in \X(1)} \expec \bra{\pare{\frac{\ud \qprob^r}{\ud \prob}} X_T} = \sup_{X \in \X(1)} \inner{\qprob^r}{X_T} \leq \sup_{X \in \X(1)} \inner{\qprob}{X_T} \leq 1;
\]
in particular, $\set{(\ud \qprob^r / \ud \prob) X_T \such X \in \X(1)}$ is bounded in probability and, since $\prob[(\ud \qprob^r / \ud \prob) > 0] = 1$, $\set{X_T \such X \in \X(1)}$ is bounded in probability as well. To show (b), note that, for any $\epsilon > 0$ and $X \in \X$ with $\prob[X_T \geq X_0] = 1$, we have
\[
X_0 \geq \inner{\qprob}{X_T} \geq \inner{\qprob}{X_0 \indic_\Omega + \epsilon \indic_{\set{X_T > X_0 + \epsilon}}} = X_0 + \epsilon \qprob[X_T > X_0 + \epsilon] \geq X_0 + \epsilon \qprob^r[X_T > 1 + \epsilon].
\]
It follows that $\qprob^r[X_T > X_0 + \epsilon] = 0$; since $\prob[(\ud \qprob^r / \ud \prob) > 0] = 1$, this is equivalent to $\prob[X_T > X_0 + \epsilon] = 0$. The latter holds for all $\epsilon > 0$, so we get $\prob[X_T > X_0] = 0$, which completes the argument.
%
%
%
\end{proof}

\section{The Case of Continuous-Path Semimartingales} \label{sec: cont semimarts}

In this section, we shall state and prove a result that implies Theorem \ref{thm: main} in the case where $S$ is a $d$-dimensional continuous-path semimartingale. Note that Assumption \eqref{S-MART} will \emph{not} be in force here; in particular, there can be more than one traded security and the prices of securities do not have to be nonnegative. In fact, Theorem \ref{thm: FTAP_cont} that is presented below actually sharpens the conclusion of Theorem \ref{thm: main} by providing a further equivalence in terms of the local rates of return and local covariances of the discounted prices $S = (S^i)_{i=1, \ldots, d}$.

We first introduce some notation. Since $S$ is a continuous-path semimartingale, one has the decomposition $S= A + M$, where $A = (A^1, \ldots, A^d)$ has continuous paths and is of finite variation, and $M = (M^1, \ldots, M^d)$ is a continuous-path local martingale. Denote by $[M^i, M^k]$ the quadratic (co)variation of $M^i$ and $M^k$. Also, let $[M, M]$ be the $d \times d$ nonnegative-definite symmetric matrix-valued process whose $(i, k)$-component is $[M^i, M^k]$. Call now $G := \trace [M, M]$, where $\trace$ is the operator returning the trace of a matrix. Observe that $G$ is an increasing, adapted, continuous process and that there exists a $d \times d$ nonnegative-definite symmetric matrix-valued process $c$ such that $[M^i, M^k] = \int_0^\cdot c^{i,k}_t \ud G_t$; $[M, M] = \int_0^\cdot c_t \ud G_t$ in short.

\begin{thm} \label{thm: FTAP_cont}
In the above-described market, the following statements are equivalent:
\begin{enumerate}
    \item There are no arbitrages of the first kind in the market.
	 \item There exists a \emph{strictly positive} local $\prob$-martingale $Y$ with $Y_0 = 1$ such that $Y S^i$ is a local $\prob$-martingale for all $i \in \set{1, \ldots, d}$.
    \item There exists a $d$-dimensional, predictable process $\rho$ such that $A = \int_0^\cdot (c_t \rho_t) \ud G_t$, as well as $\int_0^T \inner{\rho_t}{c_t \rho_t} \ud G_t < \infty$.
\end{enumerate}
\end{thm}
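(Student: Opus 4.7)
The plan is to prove the theorem via the cycle $(3) \Rightarrow (2) \Rightarrow (1) \Rightarrow (3)$.

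For $(3) \Rightarrow (2)$, I would define $Y \dfn \mathcal{E}\pare{-\int_0^\cdot \rho_t^\top \ud M_t}$. The stochastic integral is well-defined because $\int_0^T \inner{\rho_t}{c_t \rho_t} \ud G_t < \infty$, and $Y$ is a strictly positive continuous local $\prob$-martingale with $Y_0 = 1$. An integration by parts --- using $\ud [Y, M^i] = -Y (c\rho)^i \ud G$ (obtained from $\ud Y = -Y \rho^\top \ud M$ and $\ud [M^i, M^k] = c^{i,k} \ud G$) together with the structural identity $\ud S^i = (c \rho)^i \ud G + \ud M^i$ --- yields $\ud (Y S^i) = S^i \ud Y + Y \ud M^i$, manifestly a local $\prob$-martingale differential.

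For $(2) \Rightarrow (1)$, an analogous integration by parts extends the local-martingale property from $Y S^i$ to $Y X$ for every $X \in \X$: writing $X = x + \int \vartheta^\top \ud S$, the cancellation of finite-variation terms in $\ud(Y S^i)$ forces $Y \ud A^i = - \ud [Y, M^i]$, which then gives $\ud (Y X) = X \ud Y + Y \vartheta^\top \ud M$; by the Ansel-Stricker theorem, this nonnegative process is a local $\prob$-martingale. Being nonnegative, $Y X$ is a $\prob$-supermartingale, so $\expec[Y_T X_T] \leq X_0$ for every $X \in \X(1)$. For any $c > 0$,
\begin{equation*}
\prob[X_T \geq c^2] \leq \prob[Y_T \leq 1/c] + \prob[Y_T X_T \geq c] \leq \prob[Y_T \leq 1/c] + 1/c,
\end{equation*}
and since $Y_T > 0$ $\prob$-a.s., $\set{X_T \such X \in \X(1)}$ is bounded in probability; Proposition \ref{prop: NA_one iff NUPBR} then yields NA$_1$.

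For $(1) \Rightarrow (3)$, which is the main content, I would proceed in two steps. Step 1 establishes the representation $A = \int_0^\cdot c_t \rho_t \ud G_t$: measurable selection via the Moore-Penrose pseudoinverse $c^+_t$ of the symmetric positive semidefinite matrix $c_t$ produces a predictable candidate $\rho \dfn c^+ a$, provided that $\ud A \ll \ud G$ with density $a$ lying in the range of $c$, $\ud G$-almost everywhere. If either absolute continuity fails, or $a$ has a component orthogonal to the range of $c$ on a set of positive $\ud G$-measure, one extracts a bounded predictable $\eta$ with $\int_0^T \eta^\top c \eta \ud G = 0$ but $\int_0^T \eta^\top \ud A \neq 0$; after a measurable choice of sign, the wealth process $X \dfn \int_0^\cdot \eta^\top \ud S = \int_0^\cdot \eta^\top \ud A$ is of finite variation, nondecreasing, nonnegative, with $\prob[X_T > 0] > 0$. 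Setting $\xi \dfn X_T$ then furnishes an arbitrage of the first kind, since $x + X \in \X(x)$ dominates $\xi$ at time $T$ for every $x > 0$, contradicting NA$_1$. Step 2 verifies $q_T \dfn \int_0^T \inner{\rho}{c \rho} \ud G < \infty$: if $\prob[q_T = \infty] > 0$, set $\tau_n \dfn \inf\set{t \such q_t \geq n} \wedge T$ and $X^n \dfn \mathcal{E}\pare{\int_0^{\cdot \wedge \tau_n} \rho^\top \ud S} \in \X(1)$. A short computation gives $X^n_T = \exp\pare{(1/2) q_{\tau_n} + N^n_T}$, where $N^n \dfn \int_0^{\cdot \wedge \tau_n} \rho^\top \ud M$ is a continuous local martingale with $[N^n, N^n]_T = q_{\tau_n}$. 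The Dambis-Dubins-Schwarz representation writes $N^n_T = B_{q_{\tau_n}}$ for a Brownian motion $B$; since $q_{\tau_n} = n$ on $\set{q_T = \infty}$ for all sufficiently large $n$, the law of large numbers for $B$ forces $X^n_T \to \infty$ in probability on $\set{q_T = \infty}$, contradicting the boundedness in probability of $\set{X_T \such X \in \X(1)}$ from Proposition \ref{prop: NA_one iff NUPBR}.

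The main obstacle is Step 1 of $(1) \Rightarrow (3)$: decomposing the drift density into the range and kernel of $c$ in a measurable manner, and converting a surviving ``bad'' direction into an honest arbitrage of the first kind via appropriate localization and sign selection. Step 2, by contrast, is conceptually transparent once Dambis-Dubins-Schwarz is invoked --- which is, as the author emphasizes, the only deep probabilistic tool needed.
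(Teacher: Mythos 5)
Your proposal is correct and follows essentially the same route as the paper: the identical three-implication cycle, with $(3)\Rightarrow(2)$ via the same stochastic exponential $\mathcal{E}(-\int\rho^\top\ud M)$, $(2)\Rightarrow(1)$ via the supermartingale-deflator bound, and $(1)\Rightarrow(3)$ via the same two-case contrapositive (measurable selection for the drift representation, then Dambis--Dubins--Schwarz plus the law of large numbers for the square-integrability). The only cosmetic differences are that you localize by stopping times rather than truncating $\rho$ on $\{|\rho|\leq k\}$, and you deduce NA$_1$ from boundedness in probability via Proposition \ref{prop: NA_one iff NUPBR} instead of the paper's direct Fatou argument.
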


\begin{proof}

We prove $(1) \Rightarrow (3)$, $(3) \Rightarrow (2)$, and $(2) \Rightarrow (1)$ below.

\smallskip

\noindent $\mathbf{(1) \Rightarrow (3).}$ We shall show that if statement (3) of Theorem \ref{thm: FTAP_cont} is not valid, then $\set{X_T \such X \in \X(1)}$ is not bounded in probability. In view of Proposition \ref{prop: NA_one iff NUPBR}, $(1) \Rightarrow (3)$ will be established.

Suppose that one \emph{cannot} find a predictable $d$-dimensional process $\rho$ such that $A = \int_0^\cdot (c_t \rho_t) \ud G_t$. In that case, linear algebra combined with a measurable selection argument gives the existence of some bounded predictable process $\theta$ such that (a) $\int_0^T \theta_t \ud G_t= 0$, (b) $\int_0^\cdot \inner{\theta_t}{\ud A_t }$ is a \emph{nondecreasing} process, and (c) $\prob[\int_0^T \inner{\theta_t}{\ud A_t}  > 0 ] > 0$. This of course means that $X^{1, \theta} \in \X(1)$, in the notation of \eqref{eq: wealth process, all}, satisfies $X^{1, \theta} \geq 1$, $\prob[X^{1, \theta}_T > 1] > 0$. Then, $X^{1, k \theta} \in \X(1)$ for all $k \in \Natural$ and $(X^{1, k \theta})_{k \in \Natural}$ is not bounded in probability.

Now, suppose that $A = \int_0^\cdot (c_t \rho_t) \ud G_t$ for some predictable $d$-dimensional process $\rho$, but that $\prob \left[ \int_0^T \inner{\rho_t}{c_t \rho_t} \ud G_t = \infty \right] > 0$. Consider the sequence $\pi^k := \rho \indic_{\set{|\rho| \leq k}}$ and let $X^k$ be defined via $X^k_0 = 1$ and satisfying $\ud X^k_t = X^k_t \pi^k_t \ud S_t$. Then, It\^o's formula implies that
\[
\log X^k_T = - \frac{E^k_T}{2} + \int_0^{T} \left( \rho_t \indic_{\set{|\rho_t| \leq k}} \right) \ud M_t,
\]
holds for all $k \in \Natural$, where $E^k_T := \int_0^T \inner{\rho_t}{c_t \rho_t} \indic_{\set{|\rho_t| \leq k}} \ud G_t$ coincides with the total quadratic variation of the local martingale $\int_0^{\cdot} \left( \rho_t \indic_{\set{|\rho_t| \leq k}} \right) \ud
M_t$. It follows that, for every $k \in \Natural$, one
can find a one-dimensional standard Brownian motion $\beta^k$ such that
\[
\log X^k_T = - \frac{E^k_T}{2} +
\beta^k_{E^k_T}.
\]
The strong law of large numbers for Brownian motion will imply that
\[
\lim_{k \to \infty}
\prob \left[ \left| \frac{\beta^k_{E^k_T}}{E^k_T} \right|
> \epsilon, \, \int_0^T \inner{\rho_t}{c_t \rho_t} \ud G_t = \infty \right] = 0, \text{ for all } \epsilon > 0,
\]
so that
\[
\lim_{k \to \infty} \prob \bra{\frac{\log X^k_T }{E^k_T} > \frac{1}{2}  -
\epsilon \, \Big| \, \int_0^T \inner{\rho_t}{c_t \rho_t} \ud G_t = \infty } = 1, \text{ for all } \epsilon > 0.
\]
Choosing $\epsilon = 1/4$, it follows that if $\prob \left[ \int_0^T \inner{\rho_t}{c_t \rho_t} \ud G_t = \infty \right] > 0$, the sequence $(X^k_T)_{k \in \Natural}$ is not bounded in probability.

\smallskip

\noindent $\mathbf{(3) \Rightarrow (2).}$ With the data of condition (3) there, define the process
\[
Y := \exp \left( - \int_0^\cdot \inner{\rho_t}{\ud S_t} + \frac{1}{2} \int_0^\cdot \inner{\rho_t}{c_t \rho_t} \ud G_t \right).
\]
Condition (3) ensures that $Y$ is well-defined (meaning that the two integrals above make sense). It\^o's formula easily shows that $Y$ is a local $\prob$-martingale. Then, a simple use of integration-by-parts gives that $Y S^i$ is a local martingale for all $i \in \set{1, \ldots, d}$.

\smallskip

\noindent $\mathbf{(2) \Rightarrow (1).}$ The proof of this implication is somewhat classic, but will be presented anyhow for completeness. Start with a sequence $(X^k)_{k \in \Natural}$ of wealth processes such that $\limk X_0^k = 0$ as well as $\prob$-$\limk X^k_T = \xi$ for some $[0, \infty]$-valued random variable $\xi$. Since $Y S^i$ is a local $\prob$-martingale for all $i \in \set{1, \ldots, d}$, a straightforward multidimensional generalization of the proof of Proposition \ref{prop: Z Q-mart iff Y Z P-mart} shows that, for all $k \in \Natural$, $Y X^k$ is a local $\prob$-martingale. As nonnegative local $\prob$-martingales are $\prob$-supermartingales, we have $\expec[Y_T X_T^k] \leq X_0^k$ holding for all $k \in \Natural$. Fatou's lemma implies now that $\expec[Y_T \xi] \leq \liminf_{k \to \infty} \expec[Y_T X_T^k] \leq \liminf_{k \to \infty} X_0^k = 0$.
Since $Y_T > 0$ and $\xi \geq 0$, $\prob$-a.s, the last inequality holds if only if $\prob[\xi = 0] = 1$. Therefore, $(X^k)_{k \in \Natural}$ is not an arbitrage of the first kind.
\end{proof}

\begin{rem}[Market price of risk and the num\'eraire portfolio]
Condition (3) of Theorem \ref{thm: FTAP_cont} has some economic consequences. Assume for simplicity that $G$ is absolutely continuous with respect to Lebesgue measure, i.e., that $G := \int_0^\cdot g_t \ud t$ for some predictable process $g$. Under condition NA$_1$, we also have $A := \int_0^\cdot a_t \ud t$ for some predictable process $g$, and that there exists a predictable process $\rho$ such that $c \rho = a$. (In fact, the latter process $\rho$ can be taken to be equal to $c^\dagger a$, where $c^\dagger$ is the Moore-Penrose pseudo-inverse of $c$.) Now, take $c^{1/2}$ to be any root of the nonnegative-definite matrix $c$ (that can be chosen in a predictable way) and define $\sigma := c^{1/2} \sqrt{g}$. Then, we can write $\ud S_t = \sigma_t \big( \lambda_t \ud t + \ud W_t \big)$, where $W$ is a standard $d$-dimensional Brownian motion\footnote{In the case where $c$ is nonsingular for Lebesgue-almost every $t \in \Real$, $\prob$-almost surely, we have $W_\cdot := \int_0^\cdot c_t^{-1/2} \ud M_t$. If $c$ fails to be nonsingular for Lebesgue-almost every $t \in \Real$, $\prob$-almost surely, one can still construct a Brownian motion $W$ in order to have $M_\cdot = \int_0^\cdot c_t^{1/2} \ud W_t$ holding by enlarging the probability space --- check for example \cite{MR1121940}, Theorem 4.2 of Section 3.4.} and $\lambda \dfn \sigma^\top \rho$ is a \textsl{risk premium} process (in the one-dimensional case also commonly known as the \textsl{Sharpe ratio}), that has to satisfy $\int_0^T |\lambda_t|^2 \ud t < \infty$ for all $T \in \Real_+$. We conclude that condition NA$_1$ is valid if and only if a risk-premium process exists and is locally square-integrable in a pathwise sense.
\end{rem}


\bibliographystyle{siam}
\bibliography{na1}
\end{document}